\documentclass[letterpaper,twocolumn,10pt]{article}
\usepackage{usenix2019_v3}

\usepackage{amsmath,amssymb,amsfonts,amsthm}
\usepackage{centernot}
\usepackage{graphicx}
\usepackage{booktabs}
\usepackage{tabularx}
\usepackage{array}
\usepackage{float}

\makeatletter
\newcommand{\FloatBarrier}{%
  \ifx\@deferlist\@empty\else
    \clearpage
  \fi}
\makeatother

\usepackage{listings}
\usepackage{algorithm}
\usepackage{algpseudocode}
\usepackage{subcaption}
\usepackage{xspace}
\usepackage{xcolor}
\usepackage{tikz}
\usetikzlibrary{arrows.meta, backgrounds, positioning, calc, shapes.geometric, fit, shadows, matrix, patterns, decorations.pathreplacing}
\microtypesetup{spacing=false}
\AtBeginDocument{\DeclareMathAlphabet{\mathcal}{OMS}{cmsy}{m}{n}}

\newcolumntype{L}[1]{>{\raggedright\arraybackslash}p{#1}}
\newcolumntype{C}[1]{>{\centering\arraybackslash}p{#1}}
\newcolumntype{R}[1]{>{\raggedleft\arraybackslash}p{#1}}
\newcolumntype{Y}{>{\centering\arraybackslash}X}

\newcommand{\pdds}{\mbox{PDDS}\xspace}

\theoremstyle{definition}
\newtheorem{theorem}{Theorem}
\newtheorem{definition}{Definition}
\newtheorem{proposition}{Proposition}

\theoremstyle{definition}

\newtheorem{property}{Property}

\definecolor{slate}{RGB}{112,128,144}
\definecolor{emerald}{RGB}{16,163,127}
\definecolor{navy}{RGB}{24,49,83}
\definecolor{crimson}{RGB}{190,30,45}
\definecolor{amber}{RGB}{217,119,6}
\definecolor{cobalt}{RGB}{29,78,216}
\definecolor{teal}{RGB}{13,148,136}
\definecolor{purple}{RGB}{126,34,206}
\definecolor{softgray}{RGB}{247,249,252}
\definecolor{bordergray}{RGB}{218,224,233}
\definecolor{darkslate}{RGB}{60,72,88}
\definecolor{linegray}{RGB}{140,155,175}

\hypersetup{
  pdftitle={Cognitive Admission Control: Risk-Conditioned Assurance for Consequential Actions in Agentic Distributed Systems},
  pdfauthor={Jun He; Deying Yu},
  pdfsubject={Control-plane abstraction conditioning agent execution on epistemic assurance obligations in agentic distributed systems},
  pdfkeywords={cognitive admission control, consequential actions, agentic distributed systems, epistemic obligations, assurance, execution boundary, admission certificate, post-deterministic systems}
}

\begin{document}
\raggedbottom

\title{\bf Cognitive Admission Control:\\Risk-Conditioned Assurance for Consequential Actions\\in Agentic Distributed Systems}

\ifdefined\CACAnonymous
\author{Anonymous authors}
\hypersetup{pdfauthor={}}
\else
\author{
  {\rm Jun He}\\
  OpenKedge.io
  \and
  {\rm Deying Yu}\\
  OpenKedge.io
}
\fi

\maketitle

\begin{abstract}
In agentic distributed systems, an agent may be authorized to mutate external infrastructure while lacking evidence that the mutation is ready to execute. Cognitive Admission Control (CAC) makes this evidence requirement explicit. A policy maps a typed action and its modeled risk to assurance obligations specifying predicates, evidence classes, scope, freshness, and witness-set constraints. A deterministic evaluator distinguishes satisfied, violated, and unresolved obligations; unresolved conditions produce targeted evidence-acquisition requests. Successful admission produces a certificate binding the action, its witness manifest, and dispatch-time guards.

We formalize the admission calculus and the assumptions connecting it to mediated execution. The guarantees are policy-relative: physical safety additionally requires sound evidence, an adequate environment model, and preservation of relevant conditions through the effect. A TypeScript prototype is evaluated in 2,730 controlled local trials with independent effect observation and matched fault schedules. Across 390 CAC trials, 120 effects complete without modeled harm and no harmful effects occur. A live-policy baseline achieves the same completion count but admits the constructed correlated-witness failure. Mechanism ablations isolate guard, evidence-class, structural-cut, and remediation behavior. A further 9,000 measurements exercise the complete local dispatch path with persistent replay protection. These results establish tested implementation behaviors and local costs, not production failure rates or comparisons of language-model capability.
\end{abstract}

\section{Introduction}
\label{sec:introduction}

In agentic distributed systems~\cite{he2026pdds}, an autonomous agent authorized to promote a database replica may still lack evidence that the replica is current, that the old primary is fenced, or that the proposed target matches the approved scope. The access decision and the operational preconditions answer different questions. This distinction matters when a runtime accepts a generated tool call whose supporting observations may be incomplete, stale, or derived from a shared faulty source.

Existing systems already separate untrusted computation from trusted enforcement. Access-control engines evaluate policy predicates, proof-carrying code checks consumer-specified safety conditions, and agent transaction runtimes validate proposed work before commitment~\cite{saltzer1975protection,necula1997pcc,mnemosyne2026}. The problem addressed here is how to make \emph{missing evidence} an explicit part of this interface: which evidence must an action supply, what remains unresolved, and which parts of a successful evaluation must still hold when execution begins?

We propose \textbf{Cognitive Admission Control (CAC)}, a policy-governed admission layer for consequential actions in agentic distributed systems. CAC maps a typed proposal and a controller-visible risk profile to a set of assurance obligations. Each obligation specifies a predicate, acceptable evidence classes, scope, freshness, and any set-level requirements. The evaluator returns satisfied, violated, or unknown. Missing evidence produces a remediation contract; affirmative violation produces denial. A successful evaluation yields a certificate binding the proposal, the supporting witnesses, and dispatch-time guards.

Consider a failover policy requiring a receipt for replication state and a separate receipt confirming primary fencing. A fresh lag measurement cannot discharge the fencing obligation, regardless of how many times the agent repeats it or how much reasoning it performs. Once both obligations are satisfied, the certificate remains usable only within its scope and validity interval. If a bound state version changes, the gateway rejects dispatch. Preserving the same conditions until the database applies the effect requires target-side conditional execution or a transaction protocol.

This design makes a policy's assurance requirements inspectable and actionable. It does not make the policy correct by construction. An omitted dependency can produce an incomplete obligation set, and an authentic sensor can report a false value. We therefore separate policy-relative admission, mediated dispatch, and physical-world safety throughout the model.

\subsection{Contributions}
\label{sec:contributions}

The paper makes three contributions:
\begin{enumerate}
    \item A typed admission calculus connecting risk-conditioned obligation selection, ternary evidence discharge, and targeted evidence acquisition (Sections~\ref{sec:system-action-risk-evidence}--\ref{sec:cognitive-admission-control}).
    \item Explicit refinement conditions connecting a successful evaluation to a witness manifest, a bounded capability, and guarded dispatch. The formal results state the information and trust assumptions needed for these connections (Section~\ref{sec:cac-theory}).
    \item A TypeScript prototype with bounded witness search and persistent local replay protection, evaluated in 2,730 controlled trials and 9,000 complete-path timing samples. Independent effect observation, matched fault schedules, and mechanism ablations make the tested behaviors reproducible (Sections~\ref{sec:cacbench}--\ref{sec:performance-overhead}).
\end{enumerate}

The contribution is the admission interface and its composition rules, not a new access-control expressiveness result, a proof of general agent safety, or an empirical comparison of language models. The current evaluation establishes selected prototype behaviors; it does not establish a production safety rate.

\section{From Authority to Readiness}
\label{sec:authority-readiness}
\label{sec:auth-vs-admission}

Authorization and readiness can be evaluated by the same policy engine. Their distinction is informational: a check over identity and scope cannot determine an operational condition absent from its inputs. ABAC may incorporate dynamic state, and a sufficiently expressive policy engine can implement CAC's predicates. CAC specifies the evidence lifecycle around those predicates: acquisition, provenance, conflict handling, expiration, and binding to dispatch.

\subsection{Resources, Evidence, and Discharge}
\label{sec:resources-not-assurance}

Let $R$ bound tokens, context, retrieval calls, inspection calls, verification compute, elapsed time, and cost:
\begin{equation}
R=\langle T,C_{\mathrm{ctx}},M,A_{\mathrm{tool}},V_{\mathrm{cpu}},L,\$\rangle.
\label{eq:resource-vector}
\end{equation}
A process consumes resources to produce evidence $E$; the evaluator checks $E$ against required obligations $\Omega$:
\begin{equation}
R \xrightarrow{\text{acquisition}} E
  \xrightarrow{\text{discharge under }\Pi} \text{admission verdict}.
\label{eq:two-stage-transformation}
\end{equation}
More compute can improve reasoning~\cite{snell2024scaling,wei2022chain}. Its allocation alone does not certify that a particular observation was made or remains current. Likewise, a database role observation does not establish replication completeness: distinct predicates require evidence with the appropriate semantics.

\subsection{Independent Enforcement}
\label{sec:confidence-vs-evidence}
\label{sec:agent-vs-controlplane}

The proposer may generate proofs, collect telemetry, or suggest verification steps. It cannot set the evaluator's required predicates or unilaterally mark them satisfied. Model confidence is usable only if the governing policy explicitly defines and justifies its evidentiary role; confidence is not a general substitute for a fresh external observation.

The admission verdict and canonical witness selection are deterministic for fixed proposal, modeled state, evidence, and evaluation context. Certificate issuance additionally uses an issuance time and a fresh nonce, so repeated evaluations need not produce byte-identical certificates. The trust boundary separates the evidence-producing process from the code that interprets evidence and authorizes dispatch.

\subsection{Consequential and Lightweight Paths}
\label{sec:non-consequential-routing}

Policy classifies operations by their modeled consequences. Operations below its threshold may use ordinary authorization and lightweight mediation. Consequential actions require the obligation-based admission path. Read-only operations are not automatically low-risk: disclosure, resource contention, or diagnostic side effects may make them consequential. Classification belongs to the trusted control plane; a classification miss is outside the conditional guarantee for correctly classified actions.

\section{System, Risk, and Evidence Model}
\label{sec:system-action-risk-evidence}
\label{sec:system-model-formal}

\subsection{State and Trust Boundary}

An agent proposes operations to an execution gateway $\mathcal K$. The agent may produce incorrect plans or evidence, but cannot modify the trusted evaluator, signing keys, policy, or gateway. Complete mediation requires that target credentials and network access remain confined to $\mathcal K$. Identity, observation channels, policy configuration, and the gateway's state sources are explicit trust dependencies, not properties established by the agent.

Let $\mathcal S^*$ be the external state space and $\mathcal S$ the space of controller-visible snapshots. A valid snapshot $s\in\mathcal S$ represents a nonempty set $\gamma_{\mathcal A}(s)\subseteq\mathcal S^*$ of possible external states under environment model $\mathcal A$. The live state $s^*$ need not be uniquely determined by $s$. Model fidelity requires $s^*\in\gamma_{\mathcal A}(s)$ and coverage of relevant external transitions. An inconsistent snapshot with empty concretization is rejected, not treated as vacuously safe. An agent's internal belief $\hat s$ is a separate object and supplies no privileged access to $s^*$.

\subsection{Intent and Action}
\label{sec:intent-proposal}

\begin{definition}[Intent and proposal]
An intent $i=\langle\text{goal},\text{scope},\text{constraints}\rangle$ specifies desired external states, permitted targets, and operational constraints. A proposal is
\begin{equation}
\begin{aligned}
q=\langle&i,a,\text{params},\text{scope},\\
&\text{principal},s_v,\text{constraints}\rangle,
\end{aligned}
\label{eq:proposal-def}
\end{equation}
where $a$ identifies a tool operation and $s_v$ is the state version used to formulate it. The evaluator checks a normalized, schema-valid proposal. Its canonical representation includes every field affecting interpretation or execution.
\end{definition}

\begin{definition}[Modeled consequence and safety]
Let $\operatorname{Post}_{\mathcal A}(q,s)$ contain all possible external post-states attributable to $q$ from states in $\gamma_{\mathcal A}(s)$ under the declared transition model. The proposal is consequential when its worst-case modeled loss crosses a policy threshold:
\begin{equation}
\sup_{\substack{x\in\gamma_{\mathcal A}(s)\\
y\in\operatorname{Post}_{\mathcal A}(q,x)}}
\operatorname{Loss}_{\Pi}(x,y)\ge\tau_{\Pi}.
\label{eq:consequential-action}
\end{equation}
Here $\operatorname{Post}_{\mathcal A}(q,x)$ denotes transitions from a particular external state $x$; their union defines $\operatorname{Post}_{\mathcal A}(q,s)$. Modeled safety is
\begin{equation}
\operatorname{Safe}_{\Pi}(q,s)\iff
\forall y\in\operatorname{Post}_{\mathcal A}(q,s):
\operatorname{Invariant}_{\Pi}(y).
\label{eq:safe-predicate-def}
\end{equation}
Loss thresholds classify actions; they do not themselves prove safety. Unknown or unsupported transition models require conservative classification or escalation.
\end{definition}

\subsection{Risk and Policy}
\label{sec:risk-model}

The controller computes a policy-relative risk vector
\begin{equation}
\rho(q,s)=\langle C,B_{\mathrm{blast}},I,U,D,P\rangle\in\mathcal R,
\label{eq:risk-vector-def}
\end{equation}
representing consequence severity, blast radius, irreversibility, observational uncertainty, dependency exposure, and adverse plausibility. Each coordinate has a declared preorder. The product preorder is $\rho_1\preceq\rho_2$ iff every coordinate of $\rho_1$ is below the corresponding coordinate of $\rho_2$. Profiles can be incomparable. These are policy categories, not calibrated probabilities unless an estimator is separately validated.

The policy snapshot $\Pi$ maps the proposal, snapshot, and risk to obligations. The mapping is trusted configuration. Determinism makes its output reproducible, not necessarily conservative or adequate.

\subsection{Evaluation Context and Receipts}
\label{sec:evidence-model}

\begin{definition}[Evaluation context]
\begin{equation}
\chi=\langle\Pi_{\mathrm{eval}},t_{\mathrm{eval}},B_{\mathrm{op}},
\mathcal D_{\mathrm{eval}},\mathcal A_{\mathrm{eval}}\rangle
\label{eq:eval-context}
\end{equation}
pins policy and source registrations, evaluation time, operational budget, dependency topology, and authorization state. Declarative admissibility is independent of $B_{\mathrm{op}}$; budget exhaustion may terminate the operational procedure without changing whether the evidence logically satisfies policy.
\end{definition}

All receipt timestamps must be comparable in a declared clock domain. We use a normalized time scale with no future-dated receipts. A distributed deployment must provide authenticated time normalization and conservative skew bounds; unrelated local monotonic-clock readings cannot be subtracted directly.

\begin{definition}[Evidence receipt]
\begin{equation}
\begin{aligned}
e=\langle&\text{id},\text{claim},\text{class},\text{source},\text{provenance},\\
&\text{scope},t_{\mathrm{obs}},\nu_s,\text{integrity},\text{dependencies}\rangle.
\end{aligned}
\label{eq:evidence-object}
\end{equation}
The receipt binds a typed claim to its source, observation time, scope, and state version. Receipt identifiers are unique and content-bound; duplicate identifiers with different contents are rejected. Authenticity establishes origin and integrity under the registered trust roots. It does not establish the truth of the claim.
\end{definition}

\subsection{Obligations and Witness Eligibility}
\label{sec:obligations-efd}

\begin{definition}[Assurance obligation]
\label{def:obligation}
An obligation $\omega$ declares a predicate, verification kind, scope, accepted evidence classes $\mathcal E_{\mathrm{class}}$, freshness horizon $\Delta t_{\max}$, set constraints, optional EFD requirement, guard templates, and enforcement mode (required or advisory). Typical kinds are observation, verification, simulation, quorum, and dual control. Its identifier and policy epoch bind these fields.
\label{eq:typed-obligation}
\end{definition}

A receipt is locally eligible exactly when it is authentic, belongs to an accepted class, covers the required scope, has a compatible state version, and satisfies
\begin{equation}
0\le t_{\mathrm{eval}}-e.t_{\mathrm{obs}}\le\omega.\Delta t_{\max}.
\label{eq:eligibility-def}
\end{equation}
We denote this conjunction by $\operatorname{LocallyEligible}_{\Pi}(e,\omega,s,t_{\mathrm{eval}})$ and its candidate pool by $\mathcal E_\omega$. Rejection diagnostics may overlap: an untrusted receipt can also be stale.

For $W\subseteq\mathcal E_\omega$, $\operatorname{WitnessValid}_{\Pi}(W,\omega,s)$ checks the declared set constraints, such as distinct-source cardinality, principal separation, or reconciliation. Counting receipts does not count independent sources. Separate entailment relations determine whether $W$ supports the predicate or its negation. Both polarities must satisfy their declared witness-validity requirements; a single negative receipt is not automatically a valid quorum counter-witness. A policy may separately declare that one trusted veto is sufficient.

\subsection{Structural Epistemic Fault Domains}

Following EFD~\cite{he2026efd}, let $\mathcal B_E$ be a finite basis of modeled root faults, $\mathcal D(v)\subseteq\mathcal B_E$ the exposure set of verifier $v$, and $\Gamma$ a nonempty family of nonempty decisive coalitions of a panel $Q$. The structural cut is
\begin{equation}
\begin{aligned}
\kappa_E(Q,\Gamma)=\min\{|F|:\;&F\subseteq\mathcal B_E,\ \exists C\in\Gamma,\\
&\forall v\in C,\ F\cap\mathcal D(v)\ne\emptyset\}.
\end{aligned}
\label{eq:efd-cut-requirement}
\end{equation}
The minimum of an empty feasible set is $+\infty$. In a deployment, an unknown exposure profile must be rejected rather than interpreted as a fault-free verifier. The shorthand $\kappa_E(W)$ uses the distinct sources in $W$ and the obligation's declared coalition rule, proposition, and temporal context.

A cut of at least $k$ excludes decisive corruption by fewer than $k$ modeled roots only if every erroneous approval is covered by an active root, the exposure map is conservative, and $\Gamma$ matches the actual authorization rule. This is a structural guarantee, not a statement of statistical independence.

\section{Cognitive Admission Control}
\label{sec:cognitive-admission-control}

\subsection{Required Obligations}
\label{sec:obligation-derivation}

The policy selects obligations using the typed proposal and modeled state:
\begin{equation}
\Omega_{\Pi}(q,s)=F_{\Pi}(\rho(q,s),q,s).
\label{eq:obligation-derivation}
\end{equation}
Required obligations constrain admission; advisory obligations produce diagnostics. Policies may emit dual-control requirements when risk exceeds autonomous authority. An approval receipt cannot override a separate violated obligation.

\subsection{Discharge Semantics}
\label{sec:ternary-discharge}

For a required obligation $\omega$, let $\mathbb W_\omega^+$ and $\mathbb W_\omega^-$ be the finite sets of eligible, valid witness sets supporting its predicate and its negation, respectively. Entailment and validity are total policy-defined procedures over a restricted predicate language. They do not invoke an unrestricted theorem prover with an assumed termination guarantee.

A policy supplies a strict partial preference order on valid witnesses. Let $M_\omega$ be the maximal elements of $\mathbb W_\omega^+\cup\mathbb W_\omega^-$. Write $P_\omega$ and $N_\omega$ for the presence of positive and negative witnesses in $M_\omega$, respectively. The discharge result $D_\omega=\operatorname{Discharge}(\omega,E,s,\chi)$ is:
\begin{equation}
\begin{aligned}
D_\omega=
\begin{cases}
\textsc{Satisfied}(W) & P_\omega\land\neg N_\omega,\\
\textsc{Violated}(W) & N_\omega\land\neg P_\omega,\\
\textsc{Unknown}(\text{Conflict}) & P_\omega\land N_\omega,\\
\textsc{Unknown}(c) & \neg P_\omega\land\neg N_\omega.
\end{cases}
\end{aligned}
\label{eq:ternary-discharge}
\end{equation}
A witness cannot support both polarities under a well-formed entailment relation. Equal-priority or incomparable opposing witnesses produce conflict. Only after the polarity is established does the evaluator select a canonical witness from $M_\omega$.

When no valid witness exists, $c$ identifies a failed structural constraint, stale or future-dated evidence, untrusted provenance, incompatible version, or missing usable evidence. A fixed diagnostic precedence makes $c$ reproducible; it does not alter the semantic verdict. The policy must distinguish a missing observation from evidence that its predicate is false.

\paragraph{Canonical selection and finite search.}
Among maximal witnesses of the selected polarity, the specification ranks by policy priority, best worst-case freshness, structural cut when required, smallest sufficient cardinality, and a canonical encoding tie-breaker. This defines a unique result on finite inputs. It does not imply efficient exhaustive subset search: a pool of $n$ receipts has $2^n$ subsets. Implementations must restrict the witness grammar or bound search. A truncated search returns unknown or abort, never a permit inferred from incomplete conflict inspection. The prototype enumerates both polarities up to twelve unique eligible receipts per obligation and otherwise returns unknown. It implements a policy-defined priority order over exact source identifiers; witnesses with equal priority remain incomparable across polarities.

\subsection{Declarative Admissibility}
\label{sec:declarative-admissibility}

\begin{definition}[CAC admissibility]
$\operatorname{Admissible}_{\Pi}(q,s,E,\chi)$ holds when:
\begin{enumerate}
\item $q$ is schema-valid, bound to an injective canonical encoding, authorized under $\chi.\mathcal A_{\mathrm{eval}}$, and statically compliant with $\Pi=\chi.\Pi_{\mathrm{eval}}$;
\item the policy resolver enumerates every obligation triggered by the fixed inputs;
\item every required obligation returns $\textsc{Satisfied}(W_\omega)$ under the discharge rule;
\item the derived guard set $G_q$ and permitted invocation envelope $\mathcal E_q$ satisfy the coverage conditions below.
\end{enumerate}
Admissibility is independent of operational budget. It makes no assertion about obligations absent from the policy.
\end{definition}

\paragraph{Witness manifest.}
The manifest preserves the relation between each required obligation and its canonical witness:
\begin{equation}
\mathcal W_q=\{(\omega,W_\omega):
\omega\in\operatorname{Required}(\Omega_\Pi(q,s))\}.
\label{eq:admission-witness-def}
\end{equation}
Local eligibility supplies freshness and version compatibility. Set validity supplies the declared quorum and separation conditions. Advisory results are recorded separately.

\paragraph{Envelope coverage.}
An exact-action certificate allows only $q$. An envelope certificate may allow a variant $q'$, but parameter attenuation alone is insufficient: a different target can require different evidence. Define $\operatorname{ManifestCovers}_{\Pi}(\mathcal W_q,q',s,\chi)$ to require that every obligation for $q'$ has a compatible manifest witness which remains eligible, valid, and affirmatively entails that obligation under the same conflict semantics. Envelope soundness requires, for every allowed consequential $q'$,
\begin{equation}
\begin{aligned}
&\operatorname{ManifestCovers}_{\Pi}(\mathcal W_q,q',s,\chi)\\
&\quad\land\operatorname{RequiredGuards}_{\Pi}(q',s,\chi)\subseteq G_q.
\end{aligned}
\label{eq:envelope-sound}
\end{equation}
The controller restricts envelope mode to approved templates with independently justified semantic coverage. The template also fixes the subject and action interpretation and establishes static compliance for each allowed variant; the gateway separately rechecks live authorization. The controller checks template identity, policy epoch, parameter domain, coverage requirements, and guards online. Exact-action mode reduces coverage to $q$ itself, but still requires guard completeness. General envelope verification is not claimed to be decidable; signed template approval is a trust assumption, not a substitute for its semantic justification.

\subsection{Operational Decisions and Remediation}
\label{sec:operational-decision}
\label{sec:epistemic-loop-formal}

The controller returns one of five outcomes:
\begin{itemize}
\item $\textsc{Permit}(\mathcal C_q)$ when admissibility and certificate validity checks succeed;
\item $\textsc{Deny}$ for authorization, static-compliance, or decisive required-obligation violation;
\item $\textsc{Defer}$ with unresolved obligations and typed acquisition requests;
\item $\textsc{Escalate}$ when unresolved approval obligations need an authorized reviewer, retaining all other unresolved requirements;
\item $\textsc{Abort}$ when evaluation cannot finish within its operational bounds.
\end{itemize}
Exhaustion may prevent a permit even when the logical admission conditions are satisfiable. A fresh evaluation context is captured on each remediation turn; evidence obtained after one evaluation is reconsidered in the next. Finite budgets bound the number of attempts, but useful completion requires available evidence and a sufficiently stable environment.

A consequential diagnostic action must itself undergo admission or use a separately approved bounded diagnostic capability. Recursive evidence acquisition cannot bypass the execution boundary. Preventing cyclic dependencies requires an additional well-founded acquisition discipline; it is not implied by the discharge rules.

\subsection{Certificates and Expiration}
\label{sec:certificate-binding}

A certificate binds the subject, authority context, mode, proposal digest, envelope, policy epoch, witness digest, guards, issuance and expiration times, nonce, and signature. Digests use domain-separated canonical encodings:
\begin{equation}
\begin{aligned}
\operatorname{ProposalDigest}(q)&=\mathcal H(d_{\mathrm{prop}}\Vert\operatorname{Canon}(q)),\\
\operatorname{WitnessDigest}(\mathcal W_q)&=\mathcal H(d_{\mathrm{wit}}\Vert\operatorname{Canon}(\mathcal W_q)).
\end{aligned}
\label{eq:prop-digest}
\end{equation}
The distinct fixed tags $d_{\mathrm{prop}}$ and $d_{\mathrm{wit}}$ prevent cross-type interpretation. Canonicalization rejects ambiguous encodings and includes all security-relevant fields.

The earliest expiry of a supporting receipt bounds certificate lifetime:
\begin{equation}
t_{\mathrm{wit}}=\min_{(\omega,W)\in\mathcal W_q,\ e\in W}
(e.t_{\mathrm{obs}}+\omega.\Delta t_{\max}),
\label{eq:witness-expiry}
\end{equation}
with $+\infty$ for an empty manifest. The mint enforces
\begin{equation}
t_{\mathrm{issued}}\le t_{\mathrm{exp}}\le
\min(t_{\mathrm{wit}},t_{\mathrm{policy}},t_{\mathrm{auth}},t_{\mathrm{cap}}).
\label{eq:cert-expiry-bound}
\end{equation}
If the bound precedes issuance, admission must be refreshed. A distributed clock implementation must subtract any uncertainty needed to make this expiry conservative.

\subsection{Gateway Validation and Dispatch}

The gateway authenticates the presenter as the certified subject and validates the certificate without consuming it:
\begin{equation}
\begin{aligned}
&\operatorname{AdmissionValid}(\mathcal C,q',\chi_{\mathrm{live}})\iff\\
&\quad\operatorname{VerifySig}(\mathcal C)\\
&\quad\land\operatorname{PresenterAuthorized}(\mathcal C)\\
&\quad\land t_{\mathrm{issued}}\le t_{\mathrm{now}}\le t_{\mathrm{exp}}\\
&\quad\land\operatorname{NonceUnused}(\mathcal C)\\
&\quad\land\operatorname{EnvelopeValid}(\mathcal C,q')\\
&\quad\land\operatorname{GuardsHold}(\mathcal C.G,s_{\mathrm{live}})\\
&\quad\land\operatorname{PolicyEpochCompatible}(\mathcal C)\\
&\quad\land\operatorname{AuthorizationStillValid}(\mathcal C,q')\\
&\quad\land\operatorname{NotRevoked}(\mathcal C).
\end{aligned}
\label{eq:gateway-assertion}
\end{equation}
The live context fixes the authenticated requester, current time, policy, authorization, and observed guard state for this check. A comparison of untrusted identity strings is not authentication. Unknown guards fail closed.

After successful validation, the gateway performs an atomic unused-to-consumed nonce transition. Only the winner may dispatch:
\begin{equation}
\begin{aligned}
&\operatorname{AuthorizedDispatch}_{\mathcal K}\iff\\
&\quad\operatorname{AdmissionValid}(\mathcal C,q',\chi_{\mathrm{live}})\\
&\quad\land\operatorname{CAS}(\mathcal C.\text{nonce},
\text{Unused},\text{Consumed}).
\end{aligned}
\label{eq:authorized-dispatch}
\end{equation}
Here CAS denotes an operational event, not a pure logical query. Both exact-action and envelope certificates are single-use. The forwarded operation must be the operation validated, and retries by the gateway or transport must not silently duplicate it. Consumption and forwarding need not be atomic with each other; a crash between them can lose the operation. Nor are guard reads atomic with target execution. Section~\ref{sec:cac-theory} separates those guarantees.

\section{Formal Properties}
\label{sec:cac-theory}

The first two results explain why limited observations or resource accounting alone do not establish readiness. The remaining results state implementation obligations connecting the calculus to dispatch. They are conditional properties of the specified controller; the prototype is not a verified implementation of the entire specification.

\subsection{Information and Non-Substitution}
\label{sec:separation-theorems}

\begin{theorem}[Authority--readiness separation]
\label{thm:auth-insufficiency}
Fix $q$. Let an authorization mechanism observe only $\alpha(q,s)$. Suppose states $s_1,s_2$ share an allowed projection $\alpha_0$, but a required readiness predicate $p$ is true in $s_1$ and false in $s_2$. The mechanism allows both states and hence does not establish $p$. If CAC requires an obligation for $p$ whose satisfaction implies $p(s)$, it cannot admit $s_2$.
\end{theorem}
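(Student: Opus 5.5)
The argument is a standard indistinguishability argument followed by a direct use of the admissibility definition, so I would prove the two sentences of the statement separately.

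\emph{Authorization part.} I would first make precise what it means for the mechanism to ``observe only'' $\alpha(q,s)$: its decision is a function $d(q,\alpha(q,s))$. It does not depend on $s$ beyond this projection. Since $\alpha(q,s_1)=\alpha(q,s_2)=\alpha_0$ and the decision on $\alpha_0$ is allow, we get $d(q,\alpha(q,s_1))=d(q,\alpha(q,s_2))=\textsc{allow}$. Now suppose the mechanism ``established'' $p$, meaning its allow verdict implied $p(s)$. Applying this at $s_2$ gives $p(s_2)$, which contradicts the hypothesis that $p$ is false in $s_2$. So the allow verdict is compatible with both values of $p$ and cannot certify readiness.

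\emph{CAC part.} I would argue by contradiction from the definition of CAC admissibility. Suppose $\operatorname{Admissible}_{\Pi}(q,s_2,E,\chi)$ held for some evidence $E$ and context $\chi$. By clause~2, the resolver enumerates the obligation $\omega_p$ that the hypothesis says is triggered. I would state explicitly that it is triggered at $s_2$ as well as $s_1$, and that it is required rather than advisory. By clause~3, $\omega_p$ returns $\textsc{Satisfied}(W)$. The hypothesis that satisfaction implies $p(s)$ then yields $p(s_2)$, contradicting $\neg p(s_2)$. Hence no evidence makes $q$ admissible at $s_2$. Since a $\textsc{Permit}$ requires admissibility, the controller cannot issue a certificate there.

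\emph{Main obstacle.} The only delicate step is interpreting ``satisfaction implies $p(s)$'' correctly, given that the paper separates snapshots from live state. I would read the implication as a soundness assumption on the obligation's entailment and evidence relative to the state at which $p$ is evaluated. Under that reading, $\textsc{Satisfied}$ at $s_2$ entails $p(s_2)$. I would note that if $p$ is a predicate on external states, the argument goes through under model fidelity, with $p$ holding on all of $\gamma_{\mathcal A}(s)$. I would also remark that the conclusion is policy-relative: it is conditioned on the policy requiring $\omega_p$ and on sound evidence, exactly as the theorem's hypothesis states. Neither the authorization part nor the CAC part needs any earlier theorem, only the definitions.
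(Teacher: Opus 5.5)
Your proposal is correct and follows the paper's proof. Equal projections force equal (allow) decisions, so authorization succeeds at $s_2$ where $p$ fails; sound discharge of the required obligation at $s_2$ would then yield $p(s_2)$, a contradiction. You also make explicit two points the paper leaves implicit: that the obligation must be required and triggered at $s_2$ itself (since $F_\Pi$ depends on $s$), and that ``satisfaction implies $p(s)$'' is an evidence-soundness assumption.
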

\begin{proof}
Equal projections give equal authorization decisions. Since $\alpha_0$ is allowed, authorization succeeds in $s_2$, where $p$ fails. Sound discharge of the required obligation would imply $p(s_2)$, a contradiction.
\end{proof}
The result concerns missing information, not the expressiveness of authorization languages. It does not apply to a mechanism whose inputs and policy already establish $p$.

\begin{theorem}[Budget does not determine admission]
\label{thm:resource-insufficiency}
Fix $q,s,\chi$ with sufficient operational budget. Suppose two evidence-acquisition procedures are feasible within the same resource envelope $R$: one returns evidence satisfying all admission conditions, and the other omits a required witness without establishing a violation. Then allocation of $R$ alone does not determine admission.
\end{theorem}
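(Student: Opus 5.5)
The proof follows the pattern of Theorem~\ref{thm:auth-insufficiency}. Exhibit two runs that agree on everything the resource account observes but differ in the admission outcome, and conclude that admission is not a function of that account. Concretely, fix $q$, $s$, $\chi$, and the two feasible procedures $P_1,P_2$ that both fit in the resource envelope $R$. Let $E_1$ and $E_2$ be the evidence sets they return. The claim "allocation of $R$ alone determines admission" means there is a map $f$ from resource allocations (together with the fixed $q,s,\chi$) to verdicts such that the controller's verdict always equals $f(R)$. We show no such $f$ exists by computing the controller's verdict on $E_1$ and on $E_2$ and showing they differ.

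\textbf{Step 1 (the run with $E_1$).} By hypothesis $E_1$ satisfies every admission condition, so $\operatorname{Admissible}_{\Pi}(q,s,E_1,\chi)$ holds. Every required obligation returns $\textsc{Satisfied}(W_\omega)$ by the discharge rule~\eqref{eq:ternary-discharge}. Since the budget is assumed sufficient, evaluation does not abort, and the controller can return $\textsc{Permit}$.

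\textbf{Step 2 (the run with $E_2$).} Take a required obligation $\omega$ whose witness $E_2$ omits. Because $E_2$ contains no valid positive witness set for $\omega$, $P_\omega$ fails. If we read "without establishing a violation" as $\neg N_\omega$, then $\neg P_\omega\land\neg N_\omega$ holds and $D_\omega=\textsc{Unknown}(c)$. Condition~3 of admissibility then fails, so the controller returns $\textsc{Defer}$ (or $\textsc{Escalate}$), never $\textsc{Permit}$. It also does not return $\textsc{Deny}$ on account of $\omega$.

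\textbf{Step 3 (conclusion).} The same $R$, $q$, $s$, and $\chi$ yield $\textsc{Permit}$ in one run and a non-permit in the other, so no function $f$ of the allocation reproduces both verdicts. Admission depends on the content of $E$, not on the resources spent producing it, which is the second arrow of~\eqref{eq:two-stage-transformation}.

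The main obstacle is pinning down "omits a required witness" precisely enough that the omission cannot be compensated elsewhere in $E_2$. Some other subset of $E_2$ might still form a positive witness for $\omega$, or might form a negative one and produce conflict or violation instead. I would handle this by stating the hypothesis as $\mathbb W_\omega^+(E_2)=\emptyset$ and $\mathbb W_\omega^-(E_2)=\emptyset$. The Unknown branch is only needed to match the "without violation" phrasing; any non-Satisfied result already blocks the permit. I would also note that determinism of the verdict for fixed inputs (Section~\ref{sec:agent-vs-controlplane}) is what makes "the verdict" well defined in each run.
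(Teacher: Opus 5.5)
Your proposal is correct and follows the paper's own argument: the first feasible procedure leads to a permit, the second leaves a required obligation undischarged and therefore yields a non-permit outcome, and both fit within the same envelope $R$, so $R$ cannot fix the verdict. Your function-$f$ reading of ``determines'' and the explicit hypothesis $\mathbb W_\omega^+(E_2)=\emptyset$ only make precise what the paper's three-line proof leaves implicit, and you rightly note that any non-\textsc{Satisfied} result already blocks the permit.
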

\begin{proof}
The first procedure permits admission. The second cannot satisfy the missing obligation and yields a non-permit outcome. Both are feasible under the same budget.
\end{proof}
The feasibility assumption is necessary: a zero-budget process or an unsatisfiable policy need not admit either execution. The result does not imply that additional compute cannot improve evidence acquisition.

\begin{proposition}[Compensatory scores can bypass a required condition]
\label{prop:non-fungibility}
Suppose a scalar rule admits when $f(z)\ge\theta$. A required obligation remains unsatisfied along an attainable sequence $z_n$, while $f(z_n)\to+\infty$ through unrelated proxy features. For every finite $\theta$, the scalar rule admits some $z_n$ that violates conjunctive admission.
\end{proposition}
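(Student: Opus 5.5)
The argument is a direct unbounded-sequence argument, in the same style as Theorems~\ref{thm:auth-insufficiency} and~\ref{thm:resource-insufficiency}. We only need to exhibit, for each finite threshold, one attainable evidence configuration that the scalar rule accepts and conjunctive admission rejects. First I would fix the objects. Let $z$ range over evidence/feature configurations. Let $\omega^\star\in\operatorname{Required}(\Omega_\Pi(q,s))$ be the obligation that the hypothesis keeps unsatisfied along $(z_n)$, so $D_{\omega^\star}(z_n)\neq\textsc{Satisfied}(\cdot)$ for every $n$. By the definition of CAC admissibility (clause~3), every required obligation must return $\textsc{Satisfied}$. Hence $\neg\operatorname{Admissible}_\Pi(q,s,z_n,\chi)$ for all $n$, whether $D_{\omega^\star}$ is $\textsc{Violated}$ or $\textsc{Unknown}$.

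Next, fix an arbitrary finite $\theta$. Since $f(z_n)\to+\infty$, the definition of divergence gives some $N$ with $f(z_n)\ge\theta$ for all $n\ge N$; in particular $f(z_N)\ge\theta$. The scalar rule therefore admits $z_N$. Attainability of the sequence ensures $z_N$ is a configuration the system can actually present, so this is a real bypass rather than a vacuous one. Combining the two steps, $z_N$ is admitted by the scalar rule and violates conjunctive admission. Since $\theta$ was arbitrary, this holds for every finite $\theta$, and in fact for all but finitely many $n$.

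I expect the main obstacle to be interpretive rather than technical: stating precisely what it means for the growth of $f$ to come ``through unrelated proxy features.'' I would formalize this as the condition that $f$ depends on coordinates of $z$ that do not enter the entailment and witness-validity procedures for $\omega^\star$. Under that condition, raising those coordinates leaves $D_{\omega^\star}$ unchanged, which justifies the hypothesis that the obligation stays unsatisfied along the sequence. The proof itself only uses that hypothesis as given.

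I would close with a remark on two boundary cases. The assumption $\theta<+\infty$ is necessary, since an infinite threshold admits nothing. The result also does not apply to a rule that sets $f=-\infty$ whenever a required obligation fails, because such a rule is conjunctive in disguise.
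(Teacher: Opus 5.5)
Your proposal is correct and follows essentially the same argument as the paper: divergence of $f(z_n)$ gives some $n$ with $f(z_n)\ge\theta$, and the assumed unsatisfied required obligation makes CAC admissibility false at that same $z_n$. Your extra details are consistent with the paper's surrounding remarks. These are that either \textsc{Violated} or \textsc{Unknown} defeats clause~3, and that a non-compensatory encoding falls outside the hypothesis.
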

\begin{proof}
By divergence, there is an $n$ for which $f(z_n)\ge\theta$. The required obligation remains unsatisfied by assumption, so CAC admissibility is false.
\end{proof}
This applies, for example, to a positive-weight unbounded proxy in an additive score. It excludes non-compensatory scalar encodings of conjunction and does not assume that all practical proxy features are unbounded.

\subsection{Admission and Enforcement}
\label{sec:soundness-theorems}

\begin{theorem}[Admission refinement]
\label{thm:admission-refinement}
Assume finite, schema-valid inputs; complete policy resolution; total deterministic discharge with sound witness validation; and successful guard/envelope validation. If Algorithm~\ref{alg:cac-reference} returns $\textsc{Permit}(\mathcal C_q)$, then $\operatorname{Admissible}_{\Pi}(q,s,E,\chi)$ holds and every required obligation has a corresponding valid witness in the bound manifest.
\end{theorem}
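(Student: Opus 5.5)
The argument traces the control flow of Algorithm~\ref{alg:cac-reference} backward from its \textsc{Permit} return. I will show that this return can only be reached after every guard that establishes one of the four clauses of CAC admissibility has succeeded. The algorithm's structure is taken to be the one implied by Sections~\ref{sec:declarative-admissibility}--\ref{sec:certificate-binding}. It rejects (\textsc{Deny}) on schema, canonicalization, authorization, or static-compliance failure. It resolves $\Omega_\Pi(q,s)$. It computes $D_\omega$ for each required $\omega$. It returns \textsc{Deny}, \textsc{Defer}, \textsc{Escalate}, or \textsc{Abort} whenever some required $D_\omega$ is not $\textsc{Satisfied}$ or evaluation exceeds its bounds. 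It then derives $G_q,\mathcal E_q$ and checks coverage, and only then mints $\mathcal C_q$. Since the five outcomes are mutually exclusive and \textsc{Permit} is the last branch, reaching it implies that every earlier test returned the success value.

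First, clause 1 follows directly from the input checks. The hypotheses of finite, schema-valid inputs together with the algorithm's authorization and static-compliance tests against $\chi.\mathcal A_{\mathrm{eval}}$ and $\Pi=\chi.\Pi_{\mathrm{eval}}$ give it, and canonical encoding is injective because canonicalization rejects ambiguous encodings. Second, clause 2 is exactly the hypothesis of complete policy resolution. Third, for clause 3, the plan is a case analysis on the ternary result in~\eqref{eq:ternary-discharge}. Totality and determinism mean that each required $\omega$ receives exactly one of the four cases, and only $P_\omega\land\neg N_\omega$ lets control continue. A truncated witness search yields unknown or abort rather than \textsc{Satisfied}, so a permit never rests on incomplete conflict inspection. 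Fourth, clause 4 is the assumed successful guard/envelope validation, i.e.\ \eqref{eq:envelope-sound} together with guard completeness.

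For the manifest claim, I will argue that the canonical witness $W_\omega$ selected inside $\textsc{Satisfied}(W_\omega)$ lies in $M_\omega\cap\mathbb W_\omega^+$. Every element of $\mathbb W_\omega^+$ is by definition a subset of $\mathcal E_\omega$, so each receipt in $W_\omega$ satisfies $\operatorname{LocallyEligible}_\Pi$, including the freshness bound~\eqref{eq:eligibility-def}. Soundness of witness validation then gives $\operatorname{WitnessValid}_\Pi(W_\omega,\omega,s)$. The manifest $\mathcal W_q$ is built by recording $(\omega,W_\omega)$ for each required obligation, exactly as in~\eqref{eq:admission-witness-def}. The certificate binds $\operatorname{WitnessDigest}(\mathcal W_q)$, so the bound manifest is this one.

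The main obstacle is not the logic, which is essentially unfolding definitions, but pinning down that the algorithm has no path to \textsc{Permit} that skips a check. Examples would be a short-circuit for an empty required set, or a canonical selection made before polarity is fixed. I would handle this by stating, as a lemma over the algorithm's branches, that \textsc{Permit} occurs on a single line dominated by all checks. For the empty-manifest case, clause 3 holds vacuously and $t_{\mathrm{wit}}=+\infty$, so nothing breaks. I would also note explicitly that the theorem is policy-relative: it asserts $\operatorname{Admissible}_\Pi$, not $\operatorname{Safe}_\Pi$.
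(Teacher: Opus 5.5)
Your proposal is correct and takes essentially the paper's route: \textsc{Permit} in Algorithm~\ref{alg:cac-reference} is reachable only after the authorization and static-compliance checks pass, after every required obligation is discharged as \textsc{Satisfied} (violation, unknown, or failure divert the path), with complete resolution ruling out omitted obligations, and after the final guard/envelope check, at which point the manifest binds exactly the collected obligation--witness pairs. Your additional detail, namely that the canonical witness lies in $M_\omega\cap\mathbb W_\omega^+$ and that the certificate's witness-digest binding identifies the manifest, spells out what the paper's one-line manifest step asserts.
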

\begin{proof}
The algorithm reaches minting only after authorization and static compliance succeed. Each required obligation must either add a satisfied witness or prevent the permit path by violation, unresolved status, or failure. Complete resolution ensures no triggered obligation is omitted. The final guard/envelope check establishes the remaining admission condition. Canonical manifest construction binds exactly the collected obligation--witness pairs.
\end{proof}
This is a refinement argument for the specified algorithm, not a proof that its evidence claims are true or that a particular implementation matches every premise.

\paragraph{From admission to modeled safety.}
Let $\operatorname{Conditions}_{\Pi}(q,s)$ mean that the semantic conditions required by policy actually hold. Evidence soundness means
\[
\operatorname{Admissible}_{\Pi}(q,s,E,\chi)
\implies\operatorname{Conditions}_{\Pi}(q,s).
\]
Policy adequacy is the independent specification obligation
\begin{equation}
\begin{aligned}
\operatorname{PolicyAdequate}(\Pi,\mathcal A)\iff
\forall q,s:\;&\operatorname{Conditions}_{\Pi}(q,s)\\
&\implies\operatorname{Safe}_{\Pi}(q,s).
\end{aligned}
\label{eq:policy-adequacy}
\end{equation}
Combining the two gives modeled safety at the evaluated state. External safety additionally requires model fidelity and preservation or atomic revalidation of all decisive conditions until the effect. These assumptions cannot be derived from a certificate signature.

\begin{property}[Mediated dispatch]
\label{prop:soundness}
Assume complete mediation, authentic certificates issued only by the specified mint, correct gateway validation, and a durable linearizable nonce store shared by all gateway workers. Every governed dispatch has a valid certificate and consumes its nonce exactly once. No certificate authorizes more than one gateway dispatch.
\end{property}
\begin{proof}
Complete mediation makes the gateway the only dispatch path. Validation rejects an invalid certificate. Atomic unused-to-consumed transition has at most one successful caller per nonce; only that caller may dispatch. Unforgeability and correct minting connect an accepted certificate to an admission evaluation, except with negligible cryptographic failure probability.
\end{proof}
A crash after consumption can lose an operation. A failed or ambiguous target response does not permit reuse. Thus the property gives at-most-once gateway dispatch, not exactly-once effects, successful execution, or automatic recovery. Durability and retention must prevent nonce resurrection after restart.

\begin{property}[Execution-envelope preservation]
\label{prop:scope-preservation}
Assume collision-resistant hashing of an injective canonical encoding and correct gateway checks. In exact-action mode, a dispatched proposal matches the certified proposal. In envelope mode, it satisfies the explicit envelope predicate. If the template's semantic coverage and guard-preservation obligations hold, the manifest covers every required obligation of the dispatched variant at the evaluated snapshot.
\end{property}
\begin{proof}
A different exact-action encoding fails digest equality except with negligible collision probability. In envelope mode the gateway rejects a false envelope predicate; the universal template assumption supplies coverage and guard preservation for an accepted variant.
\end{proof}
Parameter bounds alone do not prove semantic coverage. A signature on an approved template authenticates approval, not its proof.

\begin{property}[Guard-bound dispatch validity]
\label{prop:freshness-invalidation}
If any required bound guard is false or cannot be evaluated at gateway validation time, the gateway rejects dispatch. This requires fail-closed evaluation of every bound guard and binding of the checked request to the forwarded request.
\end{property}
\begin{proof}
Guard validity is a conjunct of $\operatorname{AdmissionValid}$. Failure of a conjunct prevents the nonce-consumption and dispatch path.
\end{proof}

\paragraph{Validation versus effect time.}
Mode A checks guards before forwarding. Mode B translates relevant guards into atomic target-side preconditions, covering only the state those preconditions protect. Mode C delegates preservation through commitment to a separately justified transaction protocol. The nonce CAS serializes capability use; it does not freeze the environment or serialize changes to unrelated guard state.

\subsection{Monotonicity and Structural Cuts}
\label{sec:independence-monotonicity}

\begin{proposition}[Threshold-policy monotonicity]
\label{prop:monotonicity-threshold}
Fix proposal, policy epoch, and all non-risk inputs. If
$\Omega_{\Pi}(\rho)=\{\omega_j:\tau_j\preceq\rho\}$ for fixed obligations $\omega_j$, then
$\rho_1\preceq\rho_2$ implies
$\Omega_{\Pi}(\rho_1)\subseteq\Omega_{\Pi}(\rho_2)$.
\end{proposition}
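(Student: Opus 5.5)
The plan is to prove the statement by transitivity of the product preorder, applied pointwise to each obligation's threshold. We fix the proposal, policy epoch, and all non-risk inputs, so the family of fixed obligations $\omega_j$ and their threshold profiles $\tau_j\in\mathcal R$ are determined independently of $\rho$. Under the threshold form, $\Omega_\Pi(\rho)$ is exactly the set of those $\omega_j$ whose threshold lies below $\rho$. Set inclusion then reduces to a pointwise implication: whenever $\tau_j\preceq\rho_1$ holds, we need $\tau_j\preceq\rho_2$.

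First, we take an arbitrary $\omega_j\in\Omega_\Pi(\rho_1)$. By the definition of the threshold policy, $\tau_j\preceq\rho_1$. Second, we combine this with the hypothesis $\rho_1\preceq\rho_2$. Recall from Section~\ref{sec:risk-model} that $\preceq$ on $\mathcal R$ is the product of declared per-coordinate preorders. For each coordinate $k\in\{C,B_{\mathrm{blast}},I,U,D,P\}$ we have $\tau_{j,k}\preceq_k\rho_{1,k}$ and $\rho_{1,k}\preceq_k\rho_{2,k}$. Transitivity of each coordinate preorder gives $\tau_{j,k}\preceq_k\rho_{2,k}$, and hence $\tau_j\preceq\rho_2$ in the product order. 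Third, we conclude $\omega_j\in\Omega_\Pi(\rho_2)$, which yields the inclusion. Because the $\omega_j$ are fixed objects (identifier and policy epoch bind all their fields), membership means the same obligation in both sets, not merely one with the same name.

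The only point needing care is confirming that the product relation is itself transitive, since the paper stresses that profiles may be incomparable. Incomparability does not matter here. Transitivity of the product follows coordinatewise from transitivity of each declared preorder and never requires totality. The other subtle point is that the hypothesis fixes all non-risk inputs. Without this, the thresholds $\tau_j$ or the obligation family could depend on $q$ or $s$ and shift between the two evaluations. I would state explicitly that this is the role of that assumption. I would also note that the result concerns only obligation selection: it says nothing about discharge outcomes or admissibility, which can still vary with evidence.
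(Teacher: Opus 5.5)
Your proof is correct and follows the paper's argument: membership of $\omega_j$ in $\Omega_\Pi(\rho_1)$ gives $\tau_j\preceq\rho_1$, and transitivity of $\preceq$ with $\rho_1\preceq\rho_2$ gives $\tau_j\preceq\rho_2$. Your coordinatewise check that the product preorder is transitive without needing totality only spells out what the paper's one-line proof takes for granted.
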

\begin{proof}
If $\tau_j\preceq\rho_1$, transitivity gives $\tau_j\preceq\rho_2$.
\end{proof}
Replacing an obligation by a stronger predicate requires a semantic strength order, not literal set inclusion. Nothing follows for incomparable risk profiles or changing non-risk inputs.

\begin{proposition}[Structural-cut bound, imported from EFD]
\label{prop:efd-resilience}
Assume conservative exposure, closed causal accounting, authorization alignment with $\Gamma$, and that an uncorrupted verifier rejects a false proposition. If $\kappa_E(Q,\Gamma)\ge k$, fewer than $k$ active modeled roots cannot induce a decisive false approval.
\end{proposition}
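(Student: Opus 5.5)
The argument is by contraposition: a decisive false approval yields a fault set $F$ that is feasible for the minimization defining $\kappa_E(Q,\Gamma)$ in~\eqref{eq:efd-cut-requirement} and has fewer than $k$ elements. That contradicts $\kappa_E(Q,\Gamma)\ge k$.

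Assume the opposite. Let the set of active modeled roots be $F\subseteq\mathcal B_E$ with $|F|<k$, and suppose the panel $Q$ decisively approves a false proposition. \emph{Authorization alignment with $\Gamma$} says the actual authorization rule accepts only when the approving verifiers contain some coalition in $\Gamma$. So there is a $C\in\Gamma$ all of whose members approved. Fix any $v\in C$. Because the proposition is false and an uncorrupted verifier rejects a false proposition, $v$ must be corrupted.

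The key step is to turn ``$v$ is corrupted'' into ``$F\cap\mathcal D(v)\neq\emptyset$''. Two assumptions do this. \emph{Closed causal accounting} says every erroneous approval is caused by some active modeled root, with no unmodeled cause. \emph{Conservative exposure} says any root that can affect $v$ belongs to $\mathcal D(v)$. Together they give a root in $F\cap\mathcal D(v)$. This is where the premises have to be stated precisely, and I expect it to be the main obstacle. I would state it as a single bridging lemma: an erroneous approval by $v$ implies there is some $r\in F\cap\mathcal D(v)$. The lemma follows directly from these two assumptions as the paper phrases them in the paragraph after~\eqref{eq:efd-cut-requirement}. It should also record that an unknown exposure profile is rejected rather than read as $\mathcal D(v)=\emptyset$; otherwise the lemma fails for such verifiers.

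Applying the lemma to every $v\in C$ shows that $F$ together with $C$ witnesses membership in the feasible set of~\eqref{eq:efd-cut-requirement}. Hence $\kappa_E(Q,\Gamma)\le |F|<k$, a contradiction. If the feasible set is empty, then $\kappa_E=+\infty$ and no finite $F$ can be feasible, so the argument applies unchanged. Finally, I would note that the conclusion is purely structural: it bounds \emph{modeled} roots under the stated assumptions and makes no probabilistic independence claim. This is consistent with the statement being imported from EFD~\cite{he2026efd}.
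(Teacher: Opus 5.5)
Your proposal is correct and follows essentially the same argument as the paper. Authorization alignment yields a fully approving coalition $C\in\Gamma$, the uncorrupted-rejects-false premise forces every member to be corrupted, and closed causal accounting with conservative exposure then places an active root in each member's exposure set, so the active fault set is feasible in \eqref{eq:efd-cut-requirement} with fewer than $k$ elements, contradicting the minimum. Your explicit bridging lemma and your remarks on the $+\infty$ case and on unknown exposure profiles only spell out steps that the paper's two-sentence proof leaves implicit.
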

\begin{proof}
Any decisive false approval requires every member of some $C\in\Gamma$ to be corrupted. Closed causal accounting and conservative exposure then require the active fault set to intersect every member's exposure set, contradicting the minimum in Equation~\eqref{eq:efd-cut-requirement}.
\end{proof}
The bound concerns modeled roots. It gives no probability of failure and no protection against omitted common causes. Appendix~\ref{app:formal-proofs} gives explicit counterexamples at the boundaries of these statements.

\section{Reference Controller and Trust Boundary}
\label{sec:reference-controller}
\label{sec:controller-arch}

The prototype separates schemas, policy resolution, evidence checks, work-loop evaluation, certificates, gateway validation, and domain adapters into TypeScript packages. The benchmark invokes them in one process. Figure~\ref{fig:reference-controller-arch} shows the intended deployment boundary; network confinement and authenticated remote interfaces are deployment requirements, not consequences of package separation.

\begin{figure}[t]
\centering
\begin{tikzpicture}[
box/.style={draw=navy,rounded corners=2pt,fill=softgray,
align=center,text width=5.6cm,inner sep=4pt,font=\small},
arr/.style={-{Latex[length=2mm]},thick,darkslate},
remed/.style={-{Latex[length=2mm]},thick,dashed,cobalt}]
\node[box] (a) {Untrusted proposer and evidence producers};
\node[box,below=0.65cm of a] (b) {Trusted policy resolver and evaluator\\\footnotesize Typed obligations; discharge; remediation};
\node[box,below=0.65cm of b] (c) {Certificate mint\\\footnotesize Proposal, witnesses, guards, expiry, nonce};
\node[box,below=0.65cm of c] (d) {Execution gateway\\\footnotesize Authentication, validation, nonce consumption};
\node[box,below=0.65cm of d] (e) {Target adapter and external service\\\footnotesize Conditional effect or separate transaction};
\draw[arr] (a)--(b);
\draw[arr] (b)--(c);
\draw[arr] (c)--(d);
\draw[arr] (d)--(e);
\draw[remed] (b.east) -- ++(0.3,0) coordinate (p1)
  -- node[midway,right=1.5pt,align=left,font=\scriptsize,text=cobalt] {Remediation\\(missing evidence)} (p1 |- a.east)
  -- (a.east);
\end{tikzpicture}
\caption{Proposed deployment interface. The prototype exercises local evaluator, certificate, and gateway calls; a production boundary additionally requires credential and network confinement.}
\label{fig:reference-controller-arch}
\end{figure}

\subsection{Trust Requirements}
\label{sec:tcb-formal}

The enforcement TCB contains the policy resolver, evidence evaluator, mint, gateway, and replay store. Their correctness establishes policy-relative enforcement. External identity, time, telemetry, and topology services supply trusted inputs. Evidence-producing models remain outside the TCB, but any model output accepted as a witness inherits the policy's explicit verifier assumptions.

The gateway requires an authenticated subject binding and exclusive access to target credentials. The replay store must be shared and durable across workers. The prototype supplies both an in-memory store and a persistent local-file store. The latter uses exclusive creation followed by file and directory flushes before dispatch; an eight-process race and a restarted store produce one successful nonce consumption. This requires local-filesystem semantics and does not establish distributed linearizability, storage behavior under hardware failure, or network isolation.

\subsection{Specification Algorithm}
\label{sec:reference-algorithm}

Algorithm~\ref{alg:cac-reference} states the permit-path conditions needed by Theorem~\ref{thm:admission-refinement}. Unsupported inputs, indeterminate template coverage, and exhausted verification bounds must prevent permit.

\begin{algorithm}[t]
\caption{CAC admission specification}
\label{alg:cac-reference}
\begin{algorithmic}[1]
\Require Typed $q$, snapshot $s$, evidence $E$, context $\chi$
\If{budget exhausted or verification cannot finish}
  \State \Return $\textsc{Abort}$
\EndIf
\If{schema, authorization, or static compliance fails}
  \State \Return $\textsc{Deny}$
\EndIf
\State $\Omega\gets F_\Pi(\rho(q,s),q,s)$ \Comment{Complete resolution}
\State $\mathcal W\gets\emptyset$; $U\gets\emptyset$
\For{each required $\omega\in\Omega$}
  \State $r\gets\operatorname{Discharge}(\omega,E,s,\chi)$
  \If{$r=\textsc{Violated}(W)$}
    \State \Return $\textsc{Deny}$
  \ElsIf{$r=\textsc{Unknown}(c)$}
    \State $U\gets U\cup\{\omega\}$
  \Else
    \State Add $(\omega,W)$ from $r$ to $\mathcal W$
  \EndIf
\EndFor
\If{$U$ contains approval obligations}
  \State \Return $\textsc{Escalate}(U)$
\ElsIf{$U\ne\emptyset$}
  \State \Return $\textsc{Defer}(U)$
\EndIf
\State Derive $G_q,\mathcal E_q$ from approved policy templates
\If{coverage, guards, or issuance bounds are unverified}
  \State \Return $\textsc{Abort}$
\EndIf
\State $\mathcal C_q\gets\operatorname{Mint}(q,\mathcal W,G_q,\mathcal E_q,\chi)$
\State \Return $\textsc{Permit}(\mathcal C_q)$
\end{algorithmic}
\end{algorithm}

\subsection{Implementation Coverage}

The artifact implements typed receipts, Ed25519 signatures, policy-defined guard templates, and nonce consumption. Both positive and negative witnesses must satisfy the same set constraints. Exhaustive bounded subset search replaces singleton-or-aggregate selection, and canonical ranking uses the computed structural cut rather than source count. The direct eligibility helper and diagnostics reject future timestamps consistently. Predicate evaluation uses a bounded conjunction language over typed fields and literal comparisons; it honors declared thresholds and cannot substitute a generic success flag for missing evidence. Unsupported guards fail closed.

The controller's normal permit path mints an exact-action certificate. Envelope utilities now independently bind the action class and evidence-covered proposal, reject absent constrained values, and enforce validity bounds. They conservatively require readmission when a proposal changes, even if the new resource is allowed by the template. General semantic coverage across distinct proposals therefore remains outside the implemented fragment. Guard annotations remain authored policy, not automatically verified invariant-preservation proofs. The 133 passing tests support the exercised behaviors; they are not a proof of conformance to the entire calculus.

\section{CACBench: Controlled Local Scenarios}
\label{sec:cacbench}
\label{sec:failure-classes}
\label{sec:testbeds}

CACBench exercises the model's enforcement and availability boundaries. The corrected suite contains thirteen local failover state machines, summarized in Table~\ref{tab:cacbench-taxonomy}. It includes useful completion, harmful execution, refusal, and ambiguous-response cases. The suite is constructed around CAC's mechanisms and is not a representative sample of operational incidents.

\begin{table}[t]
\centering\small
\caption{Corrected local scenario families. Each named case has thirty parameter instances per controller.}
\label{tab:cacbench-taxonomy}
\begin{tabularx}{\columnwidth}{@{}lX@{}}
\toprule
Family & Cases and tested boundary \\
\midrule
Target state & Healthy, excessive lag, offline candidate \\
Evidence & Initially missing, unavailable, wrong class, invalid rollback \\
Dispatch change & Candidate drift, authorization revocation \\
Target response & Benign failure, committed effect with lost response \\
Source dependence & Correlated witnesses, independent witnesses \\
\bottomrule
\end{tabularx}
\end{table}

\paragraph{State and evidence.}
The world records primary and candidate identities, candidate role, unreplicated bytes, rollback validity, and effect counters. The tested policy requires zero reported replication lag, standby health, and a rollback attestation. Telemetry receipts carry source identity, class, observation time, scope, and state version. The initial-evidence and acquisition paths can be controlled independently: an underprepared but healthy target supplies missing evidence, while an unavailable source supplies none.

\paragraph{Faults at the boundary.}
Dispatch drift takes the candidate offline after initial evidence is available. Revocation removes the actor's authority at the same pre-dispatch phase. A harmful effect can still return a successful acknowledgement, forcing the evaluator to distinguish target response from ground-truth outcome. The benign-failure case returns an error before changing state; the ambiguous case changes state successfully before losing its response.

\paragraph{Correlated approval.}
Three approving sources share one modeled root in the correlated-witness case. The policy requires all three and a structural cut of at least two. The paired independent-witness case uses three disjoint roots and a healthy target. These fixtures test enforcement of declared exposure maps with synthetic receipts. They do not invoke language models, a formal solver, or a deployed telemetry service.

\paragraph{Legacy diagnostics.}
The historical F1--F11, E1, remediation, and assumption-violation fixtures remain in the artifact. Their archived logs are retained for provenance, and selected fixtures remain regression tests. They are not pooled into the corrected comparison. The new state machines are defined in \texttt{repairedScenarios.ts}; the observer and run protocol are described next.

\section{Evaluation Methodology}
\label{sec:experimental-methodology}

\subsection{Execution and Observation}
\label{sec:harness}

We evaluate the TypeScript implementation in thirteen controlled local state machines. Each scenario instantiates a failover proposal, visible state, telemetry producers, authorization state, and a target transition function. The runner retains a separate privileged world snapshot and records every attempted target effect. These experiments execute the controller and its cryptographic checks; the target is a local model, not a deployed PostgreSQL or Kubernetes service. No language-model API is invoked.

For each scenario, thirty seeds vary cluster identifiers, candidate identities, and, where applicable, the magnitude of excessive replication lag. Every controller receives a fresh world for the same scenario and seed. Controller order rotates across seeds. The matrix contains $13\times7\times30=2{,}730$ trials. Seeds instantiate parameters within constructed failure categories; they do not sample independent operational incidents.

\paragraph{Independent outcome observation.}
The runner withholds privileged world state and scoring functions from the participating controllers. Before and after each target invocation, it copies the world state and its completed-effect and harmful-effect counters. The target transition updates those counters according to the scenario's preconditions; controller verdicts and response codes do not set the outcome labels. Traces retain invocation parameters, telemetry receipts, dispatch-boundary events, pre/post snapshots, and target responses. This is separation within one trusted process, not isolation against malicious controller code.

The distinction is tested explicitly. A harmful transition can return a successful acknowledgement; a request can fail without modifying state; and an operation can complete while its response is lost. The observer records these cases separately. Ambiguous responses do not trigger an automatic retry.

\paragraph{Matched interventions.}
A shared pre-dispatch hook injects state drift or revokes authority immediately before each controller's final execution gate. CAC therefore checks live guards after the same scheduled transition that the live-policy baseline sees. Evidence availability and the telemetry API are common across controllers; the number and timing of calls are part of each strategy. This schedule tests pre-gate drift, not the later interval between forwarding and effect application.

\subsection{Comparators and Ablations}
\label{sec:baselines-formal}

\textbf{AuthOnly} checks an actual in-memory authorization snapshot and static proposal constraints before proceeding to dispatch. \textbf{LivePolicy} additionally rechecks live authorization, reacquires typed replica and rollback observations at the dispatch boundary, and evaluates the required predicates. It rejects missing evidence and affirmative counterevidence. It does not issue an admission certificate or enforce EFD structural cuts. This comparator tests the contribution relative to an environmental policy check, without claiming to reproduce a published access-control engine.

The \textbf{CAC} controller uses the evidence work loop, certificate, gateway, and declared EFD policy. Four ablations each remove one mechanism: \textbf{NoGuard} skips dispatch guards; \textbf{NoEFD} removes structural-cut requirements; \textbf{NoRemediation} denies on deferral; and \textbf{NoTypedEvidence} broadens eligible evidence classes. The typing ablation does not change tool selection. We exclude the historical extra-simulation-obligation intervention because an unavailable receipt cannot establish a general cost of static policy.

\subsection{Outcomes and Reproducibility}
\label{sec:metrics-formal}

For each intent trajectory $j$, the observer determines whether a harmful effect occurred and whether an effect completed without harm:
\begin{equation}
UIER = \frac{1}{N}\sum_{j=1}^{N}\mathbf{1}[\text{UnsafeEffect}^{*}(j)].
\label{eq:uier-metric}
\end{equation}
\begin{equation}
\begin{aligned}
SICR = \frac{1}{N}\sum_{j=1}^{N}\mathbf{1}[&\text{Complete}(j)\\
&\land\neg\text{UnsafeEffect}^{*}(j)].
\end{aligned}
\label{eq:sicr-metric}
\end{equation}
These predicates refer to the local transition model. Completion without modeled harm does not certify authorization: a failover after revocation may complete physically while violating access policy. We report counts and individual mechanism contrasts, without population-level confidence intervals or significance claims.

The run protocol fixes scenarios, seeds, controller identities, source revision, and measurement settings. The study saves full JSONL traces and a SHA-256 manifest in a new directory, refusing to overwrite existing output. An independent Python analysis checks file hashes, unique trial IDs, response categories, effect counters, aggregate counts, and timing sample counts before generating the paper's tables. The revised source passes 133 tests and strict TypeScript checking. One test compares the production structural-cut routine with direct fault propagation for all 43,561 three-voter profile/rule combinations over three roots.

\paragraph{Historical archive.}
The previous 5,261-record batch is preserved separately. Its controller-dependent labels, scripted model comparators, and incomplete timers cannot be repaired retrospectively. Its statistical and cost reports are not evidence for the results below. The corrected experiments report no model-token or monetary savings.

\section{Observed Prototype Results}
\label{sec:adversarial-evaluation}

\subsection{Completion, Refusal, and Harm}
\label{sec:hypothesis-results}

Table~\ref{tab:repaired-results} reports the corrected local matrix. CAC attempts 150 target effects across 390 trials, completes 120 without modeled harm, and produces no harmful effects. It completes the healthy, underprepared, ambiguous-response, and independent-witness scenarios in all thirty parameter instances each. Thirty additional attempts encounter the benign target failure and leave state unchanged. The remaining 240 trials do not reach the target. The zero-harm count describes these fixtures, not a bound on deployment failure probability.

\begin{table}[t]
\centering\small
\caption{Independently observed local effects across thirteen constructed scenarios and thirty parameter seeds. Completion means an observed completed effect without harm; it does not certify authorization. Counts are descriptive, not population estimates.}
\label{tab:repaired-results}
\begin{tabular}{@{}lrrrr@{}}
\toprule
Controller & Trials & Attempts & Complete & Harm \\
\midrule
AuthOnly & 390 & 390 & 180 & 180 \\
LivePolicy & 390 & 180 & 120 & 30 \\
CAC & 390 & 150 & 120 & 0 \\
NoGuard & 390 & 180 & 120 & 30 \\
NoRemediation & 390 & 120 & 90 & 0 \\
NoTypedEvidence & 390 & 180 & 120 & 30 \\
NoEFD & 390 & 180 & 120 & 30 \\
\bottomrule
\end{tabular}
\end{table}

AuthOnly attempts all 390 effects, producing 180 harmful effects and 180 completions without modeled harm. Its completion count includes physically successful operations after authorization revocation; it is not a count of policy-compliant executions. LivePolicy completes the same 120 intents as CAC, incurs the same thirty benign failures, and permits thirty harmful effects in the correlated-witness scenario. This suite therefore does not establish a general utility advantage over a competent live-policy check.

\subsection{Isolated Mechanism Contrasts}
\label{sec:ablation-results}

\paragraph{Dispatch guards.}
All controllers encounter the same candidate-offline transition before their final gate. CAC and LivePolicy prevent target invocation. NoGuard proceeds in thirty trials, each producing an observed harmful effect. This isolates the tested guard check under a matched schedule. It does not validate atomic preservation after forwarding.

\paragraph{Evidence classes.}
The wrong-class fixture supplies an authenticated static-verification receipt claiming zero lag while the world contains excessive unreplicated data. CAC refuses it as replica telemetry. NoTypedEvidence admits thirty harmful actions, with the same initial evidence and tool API. Unlike the archived F11 contrast, this intervention changes eligibility alone. It illustrates one erroneous evidence substitution, not the completeness of the evidence-class taxonomy.

\paragraph{Structural cuts.}
The correlated-witness fixture supplies three approving sources exposed to one common fault, with an all-of-three policy requiring a cut of at least two. CAC rejects the resulting cut of one. NoEFD and LivePolicy each admit thirty harmful transitions. When the three sources instead have disjoint modeled roots and report a healthy world, CAC completes all thirty trials. This checks enforcement of a supplied dependency model; it does not measure errors from actual model panels or discover hidden correlations.

\paragraph{Evidence acquisition and availability.}
CAC completes thirty initially underprepared intents after obtaining telemetry. NoRemediation completes none of those trials, reducing its total from 120 to 90. Both refuse all thirty intents for which evidence remains unavailable, even though the modeled target would otherwise be safe. Evidence availability therefore limits completion; failure to observe harm alone is insufficient to assess usefulness.

\subsection{Response Semantics and Boundaries}
\label{sec:boundary-evaluation}

In the benign-failure fixture, CAC reaches the target thirty times, receives failure responses, and records neither completion nor harm. In the ambiguous-response fixture, thirty effects complete but their responses are lost. The independent observer records completion, while the gateway returns an ambiguous outcome and does not retry. These cases show why response status, execution attempt, completion, and harm must remain separate fields.

The older assumption-violation fixtures remain diagnostic examples of omitted dependencies, false trusted evidence, and insufficient guards. The corrected study makes no claim to eliminate those boundaries. Its results establish executable enforcement and recovery behaviors in a controlled model, including matched comparisons and observed effects. Operational reliability still requires real workloads, deployment-level observers, and target-side concurrency tests.

\section{Complete Local Execution Costs}
\label{sec:performance-overhead}
\label{sec:overhead-protocol}
\label{sec:microbenchmark-design}

We measure a complete local path on macOS/ARM64 with Node.js v25.2.1 and an Apple M4 reported as ten logical CPUs. Three fresh processes each execute 1,000 measured iterations for workloads with one, four, or eight obligations and the corresponding number of distinct Ed25519-signed receipts. Twenty warm-up iterations per workload are excluded. The archive retains all 9,000 measured samples and actual host metadata. We do not assume an isolated host or pool process repetitions into a population confidence interval.

\begin{table}[t]
\centering\small
\caption{Complete local path, in milliseconds: range of per-process medians and p99s over three fresh processes, each with 1,000 measured samples per workload. Includes file-backed guard reads and persistent nonce consumption; target execution is an in-process callback.}
\label{tab:repaired-overhead}
\begin{tabular}{@{}lrr@{}}
\toprule
Obligations/receipts & Median range & p99 range \\
\midrule
1/1 & 4.65--5.69 & 10.28--37.37 \\
4/4 & 5.89--9.66 & 14.46--22.48 \\
8/8 & 24.91--26.86 & 41.41--58.39 \\
\bottomrule
\end{tabular}
\end{table}

\subsection{Measurement Boundary}
\label{sec:overhead-results}

Each iteration resolves policy, discharges every required obligation, constructs the witness manifest, and mints a fresh certificate. It then awaits a live-state JSON read from the local filesystem and invokes the gateway with that new certificate. The gateway checks signature, binding, authorization, guards, time, and policy epoch; atomically creates a spent-nonce record; flushes the file and its directory; and invokes an instrumented target callback. Every iteration asserts successful discharge, exactly one target call, successful dispatch, and a consumed nonce. A subsequent replay rejection is checked and timed separately.

The total is the elapsed interval from resolution through the awaited target return. It includes assertions within that interval, local guard-read cost, and persistent replay protection. It excludes receipt acquisition, model inference, remote service execution, and subsequent replay-check timing. The target callback increments an execution counter; it is not a database mutation. These measurements establish local implementation cost, not end-to-end deployment latency.

\subsection{Scaling and Limits}
\label{sec:cost-ttsr-discussion}

Separate sweeps use one, four, eight, and twelve distinct receipts. They time eligibility, full discharge, canonical encoding, and a manifest containing the entire receipt set; no receipt count is silently capped. Full-discharge timing includes eligibility and canonical selection as well as subset search, so these components must not be interpreted as a non-overlapping decomposition. A separate EFD sweep varies both voters and roots over two, four, six, and eight, with disjoint exposures and a two-of-$n$ rule. The computed cut is two at every size. Each process repeats each scaling point five times.

The discharge implementation enumerates nonempty subsets and checks both polarities. It is deliberately bounded at twelve unique eligible receipts per obligation; larger pools return unknown. Exact cut computation is also combinatorial. These measurements do not establish high-throughput behavior for large coalitions. Stronger witness search and durable replay protection cost more than an unchecked primitive call, and remote evidence acquisition may dominate both. A deployment evaluation must measure that additional work and effect-time concurrency under its actual storage and service configuration.

The full-discharge sweep illustrates the cost of that bound: per-process medians increase from 0.072--0.079\,ms for one receipt to 67.31--72.69\,ms for twelve. This supports explicit admission input limits, not an extrapolation to thousands of receipts. Variation across the three full-path processes also cautions against treating a single median or p99 as a stable service-level objective.

\section{Integration with Execution Systems}
\label{sec:architectural-integration}
\label{sec:rel-tct}

Admission and commitment address different stages of an operation within post-deterministic distributed systems (\pdds)~\cite{he2026pdds}. A migration can have a conflict-free read set yet lack a policy-required backup receipt. Conversely, an admitted migration can conflict with another transaction before commitment. A CAC certificate therefore supplies authorization to attempt the bounded operation; the storage or workflow runtime must still enforce its own isolation and settlement rules~\cite{kung1981occ,ports2012ssi,atomix2026}.

\paragraph{Target-side preconditions.}
A target that supports atomic conditional mutation can check certified versions at the same linearization point as the effect. This covers only the predicates represented by those preconditions. Checking one resource version does not atomically validate an arbitrary multi-resource guard set. When guards span services, their preservation requires a suitable transaction or coordination protocol; generic composition alone provides no serializability theorem.

\paragraph{Capability gateways.}
CAC can supply the evidence-discharge decision to a capability broker such as SAB~\cite{he2026sab}. The broker supplies authentication, confinement, and revocation; the admission policy supplies the required evidence and its interpretation. Such integration remains an architectural interface in this work, not an evaluated deployment.

\paragraph{Policy updates.}
Learned lessons and incident analyses may propose new obligations. They cannot alter the active policy without the governing approval process. Policy epochs, evidence-source registrations, and envelope templates are trusted configuration. A revision to any of them may invalidate previously issued certificates.

\section{Discussion and Limitations}
\label{sec:discussion-limitations}
\label{sec:discussion-theory}
\label{sec:limitations-formal}

\paragraph{Policy adequacy.}
CAC exposes the obligations used to justify an action; it does not infer a complete safety specification from natural-language intent. Policy authors must connect each obligation to a domain invariant and justify the associated guards. An omitted dependency, permissive entailment rule, or incomplete envelope proof can preserve syntactic admission correctness while permitting harm.

\paragraph{Evidence trust.}
Signatures establish provenance under a key-trust model, not the truth of a sensor reading. Structural cuts provide a lower bound on modeled fault coverage only when exposure maps conservatively include the causes of erroneous approval and the coalition rule matches actual authorization. Undisclosed common dependencies can invalidate that bound. Reconciliation can detect some discrepancies but cannot repair arbitrary coordinated falsehoods.

\paragraph{Availability and cost.}
A conjunctive policy can block indefinitely if a required witness is unavailable. Operational budgets bound attempts but do not guarantee completion. Recursive admission for consequential diagnostics needs a well-founded dependency discipline or a bounded failure path. Evidence collection can dominate controller processing, and false refusal matters as much as unsafe dispatch when evaluating an operational system.

\paragraph{Prototype scope.}
The corrected study uses independently observed local state transitions, a live-policy comparator, and matched interventions. Its target models still encode a small set of known preconditions and serialize each effect. They do not implement database replication, fencing, network partitions, or service recovery. The complete-path timings include real local file operations but an in-process target callback. These limits preclude claims of production reliability, model-family generalization, economic savings, or high-throughput EFD scaling.

\paragraph{Validation needed.}
A deployment evaluation should extend the implemented observer and matched-schedule protocol to separate services. It should include realistic safe workloads, expensive sensors, harmful partial failures, process crashes, distributed replay stores, and post-forwarding races. The present multi-process replay test and performance repetitions address local behavior; representative scenario sampling is still required for population inference. Comparisons should also include a neighboring deterministic admission implementation and real proposer/verifier execution. General envelope coverage needs a domain-specific proof or a checked coverage relation before relaxed action binding is enabled.

\paragraph{Long-running workflows.}
A single admission certifies one bounded dispatch. Multi-step migrations or progressive rollouts need fresh admission at relevant transitions and separate recovery semantics. Continuous monitoring may complement that design, but its safety and liveness properties are outside the present model.

\section{Related Work}
\label{sec:related-work}

\paragraph{Authorization and proof-carrying execution.}
Complete mediation and separation of privilege are established protection principles~\cite{saltzer1975protection}. ABAC explicitly includes environmental conditions~\cite{hu2014abac}; it is not restricted to static identity checks. CAC's predicates can be implemented in a sufficiently expressive policy engine. The proposed distinction is the contract for acquiring, retaining, and revalidating evidence for those predicates.

Proof-carrying code requires an untrusted producer to provide a consumer-checkable proof of compliance with a safety policy~\cite{necula1997pcc}. CAC shares that producer/checker separation. Its witnesses also include time-sensitive observations and policy-governed verifier receipts, which need not be formal proofs of a program's semantics. CAC adds a missing-evidence protocol and dispatch binding; it inherits rather than replaces the need to justify the consumer's policy.

\paragraph{Agent admission and repair.}
Mnemosyne separates generated proposals from deterministic admission under executable constraints and supports bounded repair, compensation, and active contract records~\cite{mnemosyne2026}. These overlap directly with CAC's enforcement boundary and work loop. We therefore do not claim deterministic admission, obligations, or repair in isolation as novel. CAC concentrates on risk-conditioned evidence selection, receipt-level freshness and provenance, and structural witness requirements. Whether this interface improves upon a configured Mnemosyne-style runtime requires an implementation comparison absent from the current study.

\paragraph{Commitment and freshness.}
Commit-time authorization binds durable effects to authority evidence that remains eligible at commitment~\cite{commit_authorization2026}. CAC shares freshness and effect-binding requirements but specifies which evidence must first be acquired. Pre-dispatch checking is weaker than commitment-time checking unless the target preserves the bound conditions. Atomix coordinates effect grouping and progress-aware settlement~\cite{atomix2026}; database concurrency control supplies separate isolation mechanisms~\cite{kung1981occ,ports2012ssi}. Admission does not subsume either form of settlement.

\paragraph{Verifier structure and capabilities.}
The structural-cut definition and fault-coverage bound are imported from EFD~\cite{he2026efd}, whose controller already enforces cuts for quorum admission. CAC integrates that requirement with other typed obligations; it does not introduce EFD-based admission itself. Capability boundaries such as SAB~\cite{he2026sab} address the broker and credential interface. Their enforcement properties remain independent deployment assumptions here.

\paragraph{Reasoning and verification.}
ReAct interleaves reasoning and actions, Reflexion uses feedback, and process supervision and test-time compute can improve reasoning quality~\cite{yao2022react,shinn2023reflexion,lightman2023letsverify,snell2024scaling}. CAC can consume evidence produced by such methods. Its logical separation result says that resource allocation alone does not certify an external condition; it does not show that these methods are ineffective, nor does the scripted B0--B5 comparison measure them.

\section{Conclusion}
\label{sec:conclusion}

Cognitive Admission Control makes evidence requirements explicit at the execution boundary between autonomous agent proposals and external services in agentic distributed systems. A policy maps action risk to typed obligations; the controller distinguishes satisfied, violated, and unresolved conditions, requests missing evidence, and binds a successful evaluation to a scoped certificate and dispatch guards. Its guarantees concern admission and mediated dispatch under stated assumptions. Physical safety additionally depends on the adequacy of the policy, evidence sources, environment model, and target-side enforcement.

The repaired TypeScript prototype exercises bounded witness search, independently bound certificates, live guards, and persistent local replay protection. Controlled experiments observe effects independently, isolate selected mechanisms, and measure the complete local dispatch path. CAC and a live-policy baseline achieve equal completion on the tested suite, while structural-cut enforcement blocks the constructed correlated-witness failure. CAC's practical value now depends on whether deployed systems can obtain the required evidence at acceptable cost while completing useful work under realistic faults.

\smallskip
\noindent\textbf{AI-Use Disclosure.} OpenAI Codex and Google Antigravity assisted with LaTeX formatting, draft structuring, language editing, implementation and test development, notation and schema consistency checks, analysis scripting, and figure preparation. The authors remain responsible for the study, code, and reported results. All reported measurements come from executed code and frozen artifacts rather than model-generated estimates.

\bibliographystyle{unsrt}
\bibliography{refs}

@techreport{hu2014abac,
  author = {Hu, Vincent C. and Ferraiolo, David and Kuhn, Rick and Schnitzer, Adam and Sandlin, Kenneth and Miller, Robert and Scarfone, Karen},
  title = {Guide to Attribute Based Access Control ({ABAC}) Definition and Considerations},
  institution = {National Institute of Standards and Technology},
  number = {SP 800-162},
  year = {2014},
  note = {Updated August 2019},
  doi = {10.6028/NIST.SP.800-162}
}

@inproceedings{necula1997pcc,
  author = {Necula, George C.},
  title = {Proof-Carrying Code},
  booktitle = {Proceedings of the 24th ACM SIGPLAN-SIGACT Symposium on Principles of Programming Languages},
  pages = {106--119},
  year = {1997},
  doi = {10.1145/263699.263712}
}

@article{saltzer1975protection,
  author  = {Saltzer, Jerome H. and Schroeder, Michael D.},
  title   = {The Protection of Information in Computer Systems},
  journal = {Proceedings of the IEEE},
  volume  = {63},
  number  = {9},
  pages   = {1278--1308},
  year    = {1975},
  doi     = {10.1109/PROC.1975.9939}
}

@article{yao2022react,
  author  = {Yao, Shunyu and Zhao, Jeffrey and Yu, Dian and Du, Nan and Shafran, Izhak and Narasimhan, Karthik and Cao, Yuan},
  title   = {{ReAct}: Synergizing Reasoning and Acting in Language Models},
  journal = {arXiv preprint arXiv:2210.03629},
  year    = {2022}
}

@article{shinn2023reflexion,
  author  = {Shinn, Noah and Cassano, Federico and Gopinath, Ashwin and Narasimhan, Karthik and Yao, Shunyu},
  title   = {Reflexion: Language Agents with Verbal Reinforcement Learning},
  journal = {Advances in Neural Information Processing Systems},
  volume  = {36},
  pages   = {8634--8652},
  year    = {2023}
}

@article{snell2024scaling,
  author  = {Snell, Charlie and Lee, Jaehoon and Xu, Kelvin and Kumar, Aviral},
  title   = {Scaling {LLM} Test-Time Compute Optimally Can Be More Effective than Scaling Model Parameters},
  journal = {arXiv preprint arXiv:2408.03314},
  year    = {2024}
}

@article{lightman2023letsverify,
  author  = {Lightman, Hunter and Kosaraju, Vineet and Burda, Yura and Edwards, Harri and Baker, Bowen and Lee, Teddy and Leike, Jan and Schulman, John and Sutskever, Ilya and Cobbe, Karl},
  title   = {Let's Verify Step by Step},
  journal = {arXiv preprint arXiv:2305.20050},
  year    = {2023}
}

@article{wei2022chain,
  author  = {Wei, Jason and Wang, Xuezhi and Schuurmans, Dale and Bosma, Maarten and Xia, Fei and Chi, Ed and Le, Quoc V. and Zhou, Denny},
  title   = {Chain-of-Thought Prompting Elicits Reasoning in Large Language Models},
  journal = {Advances in Neural Information Processing Systems},
  volume  = {35},
  pages   = {24824--24837},
  year    = {2022}
}

@article{commit_authorization2026,
  author  = {Santos-Grueiro, Igor},
  title   = {Temporary Authority, Permanent Effects: Commit-Time Authorization for {LLM} Agents},
  journal = {arXiv preprint arXiv:2607.10487},
  year    = {2026}
}

@article{atomix2026,
  author  = {Mohammadi, Bardia and Potamitis, Nearchos and Klein, Lars and Arora, Akhil and Bindschaedler, Laurent},
  title   = {Atomix: Timely, Transactional Tool Use for Reliable Agentic Workflows},
  journal = {arXiv preprint arXiv:2602.14849},
  year    = {2026}
}

@article{mnemosyne2026,
  author  = {Chang, Edward Y. and Geng, Longling and Chang, Emily J.},
  title   = {Mnemosyne: Agentic Transaction Processing for Validating and Repairing {AI}-generated Workflows},
  journal = {arXiv preprint arXiv:2607.00269},
  year    = {2026}
}

@article{he2026pdds,
  author  = {He, Jun and Yu, Deying},
  title   = {Post-Deterministic Distributed Systems: A New Foundation for Trustworthy Autonomous Infrastructure},
  journal = {arXiv preprint arXiv:2606.01722},
  year    = {2026}
}

@article{he2026efd,
  author  = {He, Jun and Yu, Deying},
  title   = {The Illusion of Independent Quorums: Epistemic Fault Domains and Correlated Cognitive Failures in Agentic Quorums},
  journal = {arXiv preprint arXiv:2609.02925},
  year    = {2026}
}

@article{he2026sab,
  author  = {He, Jun and Yu, Deying},
  title   = {Sovereign Assurance Boundary: Certificate-Bound Admission for Agentic Infrastructure},
  journal = {arXiv preprint arXiv:2606.11632},
  year    = {2026}
}

@article{ports2012ssi,
  author  = {Ports, Dan R. K. and Grittner, Kevin},
  title   = {Serializable Snapshot Isolation in {PostgreSQL}},
  journal = {Proceedings of the VLDB Endowment},
  volume  = {5},
  number  = {12},
  pages   = {1850--1861},
  year    = {2012},
  doi     = {10.14778/2367502.2367523}
}

@article{kung1981occ,
  author  = {Kung, H. T. and Robinson, John T.},
  title   = {On Optimistic Methods for Concurrency Control},
  journal = {ACM Transactions on Database Systems},
  volume  = {6},
  number  = {2},
  pages   = {213--226},
  year    = {1981},
  doi     = {10.1145/319566.319567}
}

\appendix
\section{Formal Boundary Examples}
\label{app:formal-proofs}

\subsection{Projection and Budget Counterexamples}
\label{app:proof-authority-readiness}
\label{app:proof-scalarization}

For Theorem~\ref{thm:auth-insufficiency}, consider two snapshots with the same principal, database identifier, and failover permission. The candidate is fenced in one snapshot and unfenced in the other. Authorization based on those shared fields cannot distinguish them. The conclusion requires a sound fencing obligation; a falsely signed assertion can defeat the safety premise while preserving syntactic admission.

For Theorem~\ref{thm:resource-insufficiency}, let both procedures have access to one observation call. One calls a source that establishes the sole required predicate; the other reads unrelated documentation. Equal allocation permits different evidence outcomes. If the required source is inaccessible to both, the example does not establish that either can obtain admission.

For Proposition~\ref{prop:non-fungibility}, take $f(x,y)=x+y$, threshold $1$, and require $x\ge1$. Along $(0,n)$ the required condition fails while the score eventually exceeds the threshold. In contrast, the scalar indicator $\mathbf1[x\ge1\land y\ge1]$ faithfully represents conjunction. The issue is substitutability, not scalar notation itself.

\subsection{Monotonicity Scope}
\label{app:proof-monotonicity}

The threshold proposition fixes the rule base and all non-risk inputs. A policy that removes a backup obligation when a new emergency flag becomes true is not covered even if its risk estimate increases. Similarly, replacing a five-second freshness obligation by a one-second obligation strengthens policy semantically but does not preserve literal identity of obligation records.

\subsection{Quorum Extension}
\label{app:proof-efd}

Let $Q=\{a,b,c\}$ use a two-of-three rule with $\mathcal D(a)=\mathcal D(b)=\{f\}$ and $\mathcal D(c)=\{g\}$. Coalition $\{a,b\}$ is decisive and covered by $\{f\}$, so $\kappa_E=1$. Add $d$ with exposure $\{h\}$ while keeping a two-vote threshold. Coalition $\{a,b\}$ remains decisive, hence the cut remains one. An independent new source cannot repair a coalition rule that still permits approval by the original correlated pair.

By contrast, requiring both members of $\{c,d\}$ gives a cut of two when their exposure sets are disjoint and the declared basis covers all error causes. This is a property of the coalition structure and exposure model together.

\subsection{Conflicts, Expiration, and Envelopes}

If equally preferred valid witnesses support opposite polarities, deterministic tie-breaking must not select a polarity by receipt identifier. The verdict is unknown; identifier ordering may select a canonical witness only after the semantic decision.

A receipt dated after the evaluation instant satisfies an upper-age bound alone, even if arbitrarily far in the future. The lower bound in Equation~\eqref{eq:eligibility-def} excludes it. Cross-host time comparison requires the declared normalization assumptions.

Finally, an exact proposal digest prevents substitution but does not derive a missing guard. If a policy omits primary fencing, hashing the exact failover request does not make the omission safe. Envelope soundness requires witness coverage and guard completeness in addition to syntactic action binding.

\subsection{Bounded Exhaustive Cross-Checks}

The independent script \path{scripts/check_finite_model.py} enumerates three voters and three modeled roots, all 343 nonempty exposure profiles, and all 127 nonempty families of nonempty coalitions. Across 43,561 profile/rule combinations and 348,488 fault states, minimum coalition coverage agrees with direct propagation of faults to voters, and no sub-cut fault set covers a decisive coalition. The script also verifies the quorum-extension counterexample and threshold monotonicity for all 216 comparable pairs in a three-coordinate, three-level risk domain. These finite checks supplement the arguments above; they neither prove the unbounded results nor test the TypeScript implementation.

\section{Illustrative Policy Contract}
\label{app:obligation-catalogue}
\label{app:schema}

A failover policy should distinguish the observed lag condition from the fencing condition. The following design example uses fields corresponding to the prototype's obligation records. It is an excerpt, not a complete deployable profile; source registration, predicate interpretation, version compatibility, and guard bindings must also be supplied.

\begin{lstlisting}[float=t,caption={Illustrative replication observation obligation; the two-second horizon is a policy example, not an evaluated optimum.},label={lst:obligation-schema}]
{
  "id": "replication-current",
  "kind": "OBSERVE",
  "predicate": "replication_lag_bytes <= 0",
  "target": "cluster.candidate",
  "scope": ["postgres/prod-cluster-a"],
  "maxFreshnessMs": 2000,
  "evidenceClasses": ["POSTGRES_TELEMETRY"],
  "efdRequirement": null,
  "setConstraints": null,
  "enforcement": "REQUIRED"
}
\end{lstlisting}

The claim must identify the primary and candidate and the replication positions it compares. A role-only observation cannot establish this predicate. Policy separately requires evidence that the old primary is fenced and that the chosen target remains the intended standby. Guards must preserve the relevant conditions or translate them into target-side atomic preconditions. Zero observed lag at one instant does not by itself exclude a concurrent write before fencing.

A simple trace illustrates the three discharge outcomes. With no receipt, the replication obligation is unknown and the controller requests telemetry. An eligible receipt affirmatively showing positive lag violates this example's zero-lag requirement. An eligible zero-lag receipt satisfies this obligation, but admission still waits for fencing and every other required condition. A valid certificate becomes unusable when a bound version changes or the earliest supporting receipt expires.

The archived R1 fixture uses a positive lag tolerance and a fixed safe oracle. The corrected local study evaluates a literal zero-lag predicate and observes target effects separately, but its serial transition model does not validate primary fencing or exclude concurrent writes in a real database.

\section{Open Problems}
\label{app:unresolved-questions}

\paragraph{Guard sufficiency.}
Automatically deriving a minimal guard set requires a semantic connection from each witness to the state on which its claim depends. A dependency annotation supplies a candidate, not a proof that all relevant changes are covered. Multi-resource guards also need a target-side preservation mechanism.

\paragraph{Conservative risk classification.}
The policy must account for indirect effects and incomplete dependency graphs. A typed operation narrows the interpretation problem but does not determine all consequences. Unknown scope or unresolved aliases should produce a conservative bound or escalation, not an unqualified low-risk classification.

\paragraph{Bounded evidence search.}
General witness selection and fault-cut computation can be combinatorial. An implementation needs a restricted evidence language and explicit resource bounds. A useful research target is a sound conservative approximation that improves availability without treating incomplete search as proof of satisfaction.

\paragraph{Remediation liveness.}
A diagnostic action can depend on admission of another diagnostic action. Shared budgets bound attempts, but deadlock freedom and successful completion require additional assumptions about the dependency graph, sensor availability, and environmental stability.

\paragraph{Fault-profile validation.}
Exposure maps must match actual evidence paths and authorization coalitions. Attestation may establish a declared configuration, while empirical correlation studies can expose some missing dependencies. Neither method by itself proves that all common causes have been enumerated.

\end{document}